\date{\today}
\newcommand{\Z}{{\mathbb Z}}
\newcommand{\R}{{\mathbb R}}
\newcommand{\C}{{\mathbb C}}
\newcommand{\T}{{\mathbb T}}
\newtheorem{theorem}{Theorem}[section]
\newtheorem{remark}[theorem]{Remark}
\newtheorem{lemma}[theorem]{Lemma}
\newtheorem{prop}[theorem]{Proposition}
\newtheorem{coro}[theorem]{Corollary}
\def\be{\begin{equation}}
\def\ee{\end{equation}}
\begin{document}

\title[Almost Ballistic Transport for the Fibonacci Hamiltonian]{Almost Ballistic Transport for the Weakly Coupled Fibonacci Hamiltonian}

\author[D.\ Damanik]{David Damanik}

\address{Department of Mathematics, Rice University, Houston, TX~77005, USA}

\email{damanik@rice.edu}

\thanks{D.\ D.\ was supported in part by a Simons Fellowship and NSF grant DMS--1067988.}

\author[A.\ Gorodetski]{Anton Gorodetski}

\address{Department of Mathematics, University of California, Irvine, CA~92697, USA}

\email{asgor@math.uci.edu}

\thanks{A.\ G.\ was supported in part by NSF grants DMS--1301515 and %DMS--0901627 and
IIS-1018433.}

\begin{abstract}
We prove estimates for the transport exponents associated with the weakly coupled Fibonacci Hamiltonian. It follows in particular that the upper transport exponent $\tilde \alpha^\pm_u$ approaches the value one as the coupling goes to zero. Moreover, for sufficiently small coupling, $\tilde \alpha^\pm_u$ strictly exceeds the fractal dimension of the spectrum.
\end{abstract}

\maketitle

\section{Introduction}\label{s.intro}

The Fibonacci Hamiltonian is the most prominent model in the study of electronic properties of quasicrystals. It is given by the discrete one-dimensional Schr\"odinger operator
\begin{equation}\label{e.fib}
[H_{\lambda,\omega} u](n) = u(n+1) + u(n-1) + \lambda \chi_{[1-\alpha,1)}(n \alpha + \omega \!\!\!\! \mod 1) u(n),
\end{equation}
acting in $\ell^2(\Z)$, where $\lambda > 0$ is the coupling constant, $\alpha = \frac{\sqrt{5}-1}{2}$ is the frequency, and $\omega \in \T = \R / \Z$ is the phase. In particular, $\alpha$ is the inverse of the golden ratio
$$
\phi = \frac{\sqrt{5} + 1}{2}.
$$

It is well known that the operator $H_{\lambda,\omega}$ has purely singular continuous spectrum for all parameter values; compare \cite{DL99a, S89}. The RAGE Theorem (see, e.g., \cite[Theorem~XI.115]{RS3}) therefore suggests that when studying the Schr\"odinger time evolution for this Schr\"odinger operator, that is, $e^{-itH_{\lambda,\omega}} \psi$ for some initial state $\psi \in \ell^2(\Z)$, one should consider time-averaged quantities. For simplicity, let us consider initial states of the form $\delta_n$, $n \in \Z$. Since a translation in space simply results in an adjustment of the phase, we may without loss of generality focus on the particular case $\psi = \delta_0$. The time-averaged spreading of $e^{-itH_{\lambda,\omega}} \delta_0$ is usually captured on a power-law scale as follows; compare, for example, \cite{DT10, L96}. For $p > 0$, consider the $p$-th moment of the position operator,
$$
\langle |X|_{\delta_0}^p \rangle (t) = \sum_{n \in \Z} |n|^p | \langle e^{-itH_{\lambda,\omega}} \delta_0 , \delta_n \rangle |^2
$$
We average in time as follows. If $f(t)$ is a function of $t > 0$ and $T > 0$ is given, we denote the time-averaged function at $T$ by $\langle f \rangle (T)$:
$$
\langle f \rangle (T) = \frac{2}{T} \int_0^{\infty} e^{-2t/T} f(t) \, dt.
$$
Then, the corresponding upper and lower transport exponents $\tilde \beta^+_{\delta_0}(p)$ and $\tilde \beta^-_{\delta_0}(p)$ are given, respectively, by
$$
\tilde \beta^+_{\delta_0}(p) = \limsup_{T \to \infty} \frac{\log \langle \langle |X|_{\delta_0}^p \rangle \rangle (T) }{p \, \log T},
$$
$$
\tilde \beta^-_{\delta_0}(p) = \liminf_{T \to \infty} \frac{\log \langle \langle |X|_{\delta_0}^p \rangle \rangle (T) }{p \, \log T}.
$$
The transport exponents $\tilde \beta^\pm_{\delta_0}(p)$ belong to $[0,1]$ and are non-decreasing in $p$ (see, e.g., \cite{DT10}), and hence the following limits exist:
\begin{align*}
\tilde \alpha_l^\pm & = \lim_{p \to 0} \tilde \beta^\pm_{\delta_0}(p), \\
\tilde \alpha_u^\pm & = \lim_{p \to \infty} \tilde \beta^\pm_{\delta_0}(p).
\end{align*}

Ballistic transport corresponds to transport exponents being equal to one, diffusive transport corresponds to the value $\frac12$, and vanishing transport exponents correspond to (some weak form of) dynamical localization. In all other cases, transport is called anomalous. The Fibonacci Hamiltonian has long been the primary candidate for an interesting model exhibiting anomalous transport, going back at least to the work of Abe and Hiramoto \cite{AH}. Many papers have been devoted to a study of the transport properties of the Fibonacci Hamiltonian; see, for example, \cite{BLS, D98, D05, DKL, DST, DT03, DT05, DT07, DT08, JL2, kkl}. For example, it is known that all the transport exponents defined above are strictly positive for all $\lambda > 0$, $\omega \in \T$; see \cite{DKL}. On the other hand, upper bounds for all the transport exponents were shown in \cite{DT07} for $\lambda > 8$ (see also \cite{BLS} for a somewhat weaker result). The exact large coupling asymptotics of $\tilde \alpha_u^\pm$ were identified in \cite{DT08}, where is was shown that
\begin{equation}\label{e.inftyapproach2}
\lim_{\lambda \to \infty} \tilde \alpha_u^\pm \cdot \log \lambda = 2 \log \phi,
\end{equation}
uniformly in $\omega \in \T$. In particular, the Fibonacci Hamiltonian indeed gives rise to anomalous transport for sufficiently large coupling. The behavior in the weak coupling regime, on the other hand, is poorly understood. It has long been expected that the transport exponents should be ``continuous at zero,'' that is, they should approximate the value one, which is the transport exponent associated with the free Schr\"odinger operator (i.e., the one obtained by setting $\lambda$ equal to zero). In particular, it has been expected that
\begin{equation}\label{e.zeroapproach}
\lim_{\lambda \to 0} \tilde \alpha_u^\pm = 1.
\end{equation}
Since $\tilde \alpha_u^\pm \le 1$ by general principles, one needs to establish lower bounds that approach one as $\lambda$ tends to zero. However, the previously known lower bounds for small coupling are far away from one, and hence they were quite obviously far from optimal; compare \cite{D98, DG11, DKL, DST, DT03, DT05, JL2} for these prior results. One of the primary reasons for this apparent gap between our understanding and the (conjecturally) correct result is that the dimension of the spectral measure of $\delta_0$ is poorly understood. Other spectral quantities have been shown to be ``continuous at zero.'' Namely, as $\lambda$ tends to zero, the dimension of the spectrum tends to one \cite{DG11}, and the dimension of the density of states measure tends to one as well \cite{DG12}. Tangentially, we note that also the optimal H\"older exponent of the integrated density of states approaches the value associated with the free case (namely one-half) as $\lambda$ tends to zero \cite{DG13}. The result for the density of states measure just mentioned gave renewed hope for a corresponding result for the transport exponents since this measure is the phase average of the spectral measure in question, and hence one could hope for phase-averaged transport to approach ballistic rates in the zero coupling limit. Alas, no such general connection is known and therefore this did not allow the authors of \cite{DG12} to conclude the desired result.

In this paper we present a new approach to obtaining lower bounds for the transport exponents associated with the weakly coupled Fibonacci Hamiltonian, which allows us to obtain the desired continuity result \eqref{e.zeroapproach}.

\begin{theorem}\label{t.main}
There is a constant $c > 0$ such that for $\lambda > 0$ sufficiently small, we have
$$
1 - c\lambda^2 \le \tilde \alpha_u^\pm \le 1,
$$
uniformly in $\omega \in \T$.
\end{theorem}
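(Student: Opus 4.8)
The upper bound $\tilde\alpha_u^\pm \le 1$ is already given by general principles, so the entire content is the lower bound $\tilde\alpha_u^\pm \ge 1 - c\lambda^2$. The plan is to exploit the relation between transport and the spreading of wave packets along a carefully chosen sequence of times, and to extract the needed lower bound not from the spectral measure of $\delta_0$ directly (which, as the introduction emphasizes, is poorly understood), but from a quantitative analysis of the transfer matrix dynamics in the weak coupling regime. Concretely, I would first recall the standard lower bounds on transport exponents in terms of the growth rate of generalized eigenfunctions / transfer matrix norms: if $\|M(n;E)\|$ stays polynomially bounded (with a small exponent) for energies $E$ ranging over a set of positive measure — or more precisely, if the averaged quantity $\int \|M(n;E)\|^{-2}\,d\mu(E)$ (or a suitable variant with the density of states or Lebesgue measure on the spectrum) decays only slowly in $n$ — then one obtains a lower bound on $\tilde\beta^-_{\delta_0}(p)$, and hence after $p\to\infty$ on $\tilde\alpha_u^\pm$. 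The key point is that taking $p\to\infty$ allows one to detect the \emph{best-transporting} energies, so one only needs good transfer matrix bounds on a small but positive-measure set.

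Next I would bring in the trace map / hyperbolicity machinery for the weakly coupled Fibonacci Hamiltonian. For small $\lambda$, the spectrum $\Sigma_\lambda$ is a dynamically defined Cantor set arising from a hyperbolic set of the Fibonacci trace map, and one has very precise control (from the Damanik–Gorodetski work on the weak coupling limit) on the relevant contraction/expansion rates, on the distortion, and on how the Lyapunov exponent of the transfer matrices over $\Sigma_\lambda$ scales with $\lambda$. The crucial quantitative input is that along the spectrum the transfer matrix norms grow subexponentially, and more sharply, that the exponent governing the power-law growth of $\|M(q_k;E)\|$ at the Fibonacci times $q_k$ (driven by the local expansion rate of the trace map at $E$) is bounded, for a positive-measure set of $E$, by something like $1 + O(\lambda^2)$ times the corresponding free value. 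One then feeds this into the Damanik–Tcheremchantsev style lower bound for transport exponents (the method that produces \eqref{e.inftyapproach2} in the large coupling regime, run in reverse to produce near-ballistic rates), evaluating everything along the sequence $T \sim q_k$ and letting $p \to \infty$.

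The main obstacle — and the step that deserves the most care — will be obtaining the \emph{quantitatively sharp} $O(\lambda^2)$ control on the relevant transfer matrix growth exponent over a positive-measure subset of the spectrum, uniformly in the phase $\omega$. Qualitatively it is known that everything converges to the free picture as $\lambda\to 0$; the issue is to get the right \emph{rate}, namely that the deviation of the transport exponent from $1$ is quadratic (not merely linear) in $\lambda$. I expect this to require: (i) a second-order perturbative expansion of the trace map and of the associated Lyapunov-type exponent near $\lambda = 0$, showing the first-order term vanishes (which is plausible since the perturbation enters symmetrically, cf.\ the analogous quadratic behavior for the dimension of the spectrum), and (ii) a careful measure-theoretic argument — using bounded distortion for the trace map and the regularity of the density of states measure established in \cite{DG12} — to guarantee that the set of ``good'' energies, on which the transfer matrices grow slowly enough, has measure bounded below independently of $\lambda$ and $\omega$. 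Combining (i) and (ii) with the transport lower bound and sending $p\to\infty$ then yields $\tilde\alpha_u^\pm \ge 1 - c\lambda^2$, uniformly in $\omega$, as claimed.
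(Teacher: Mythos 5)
Your overall frame---avoid the spectral measure of $\delta_0$, use Damanik--Tcheremchantsev transfer-matrix machinery, and let $p\to\infty$ so that only the best-transporting energies matter---is the right family of ideas, but the two steps you flag as the heart of the matter are precisely where the proposal breaks down. First, your proposed source of the quadratic rate is incorrect. You argue that the first-order term in a perturbative expansion of the relevant Lyapunov-type exponent should vanish ``cf.\ the analogous quadratic behavior for the dimension of the spectrum,'' but the dimension of the spectrum deviates \emph{linearly}: $1 - C_1\lambda \le \dim \Sigma_\lambda \le 1 - C_2\lambda$ for small $\lambda$ (\cite{DG11}; this linear deviation is exactly what makes Corollary~\ref{c.lastquestion} nontrivial). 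Consequently, expansion rates seen by ``typical'' energies---whether typical for the density of states measure or for any positive-measure selection over the spectrum---deviate linearly in $\lambda$, and your route (i)+(ii) would at best give $1 - c\lambda$, not $1 - c\lambda^2$. The actual quadratic gain comes from one distinguished periodic orbit of the trace map, the six-cycle through $\left(0,0,\sqrt{1+\tfrac{\lambda^2}{4}}\right)$, whose unstable multiplier is computed explicitly (Lemma~\ref{l.mult}) and satisfies $d(\lambda) = \log\mu^u(\lambda) = \log\phi + O(\lambda^2)$; the Inclination Lemma then produces, for each large $k$, a single zero $E_k$ of the trace polynomial $x_k$ with $|x_k'(E_k)| \approx e^{k\,d(\lambda)}$ (Lemma~\ref{l.keylemma}). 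Without singling out this orbit (equivalently, the energies it shadows), no second-order expansion of a generic exponent will deliver the $\lambda^2$ rate.

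Second, your measure-theoretic step (ii) is not viable as stated: $\Sigma_\lambda$ has zero Lebesgue measure, and the introduction of the paper explicitly records that the regularity of the density of states measure from \cite{DG12} is \emph{not} known to imply anything about transport---that missing link is the very obstruction this paper is designed to circumvent, so invoking it reintroduces the gap. What replaces your ``good set of measure bounded below uniformly in $\lambda$'' is a quantitative complex-analytic step at each scale $k$: one shows (maximum modulus principle plus Rouch\'e) that $E_k$ is the unique zero of $x_k$ in its connected component of $\sigma_k^{\delta} = \{z : |x_k(z)| \le 1+\delta\}$, so that $x_k^{-1}$ is univalent there, and the Koebe distortion theorem converts the derivative bound into an inscribed ball $B(E_k, r_k)$ in that component with $r_k \approx e^{-k\,d(\lambda)}$. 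On this ball the transfer matrices up to $F_k$ obey power-law bounds uniformly in $\omega$ (Proposition~\ref{p.tmest}), and the Parseval formula then bounds $\langle P(F_k,\cdot)\rangle(T)$ from below for $T \gtrsim F_k^{s}$ with $s = d(\lambda)/\log\phi + o(1)$, yielding $\tilde\alpha_u^{\pm} \ge \log\phi/d(\lambda) \ge 1 - c\lambda^2$. The exponent is thus governed by the comparison of the two exponential rates $r_k \sim e^{-k d(\lambda)}$ and $F_k \sim \phi^{k}$, i.e.\ by how slowly the good interval shrinks relative to the length scale, not by a $\lambda$-independent lower bound on the measure of a good energy set; that quantitative balance, together with the identification of the central periodic orbit, is the content your proposal is missing.
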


The proof of Theorem~\ref{t.main} also gives estimates for $\tilde \beta^\pm_{\delta_0}(p)$. Since the specific expressions we obtain are somewhat lengthy, we do note state them here and refer the interested reader to Section~\ref{s.qdynamics} (see Remark~\ref{r.betapest}).

\medskip

Recall that the spectrum of $H_{\lambda,\omega}$ is independent of $\omega$ and hence may be denoted by $\Sigma_\lambda$. It is known that $\Sigma_\lambda$ is a zero-measure Cantor set for every $\lambda > 0$; see \cite{S89}. In fact, its Hausdorff dimension is strictly between zero and one \cite{Can}. Moreover, the box counting dimension of $\Sigma_\lambda$ exists and is equal to the Hausdorff dimension for $\lambda$ sufficiently large \cite{DEGT} and for $\lambda$ sufficiently small \cite{DG09}. Last asked in \cite{L96} whether the upper box counting dimension of the spectrum bounds all transport exponents from above. The large coupling asymptotics for $\tilde \alpha_u^\pm$ in \eqref{e.zeroapproach}, established in \cite{DT08}, provided a negative answer to this question because \cite{DEGT} had obtained the following large coupling asymptotics for the dimension of the spectrum:
\begin{equation}\label{e.inftyapproach}
\lim_{\lambda \to \infty} \dim \Sigma_\lambda \cdot \log \lambda = \hat c \log \phi,
\end{equation}
where $\hat c$ is an explicit constant that is strictly less than two ($\hat c = \frac{\log(1+\sqrt{2})}{\log \phi} = 1.8...$). Indeed, \eqref{e.inftyapproach2} and \eqref{e.inftyapproach}  imply that for $\lambda$ sufficiently large, we must have $\tilde \alpha_u^{\pm} > \dim \Sigma_\lambda$.

Here we can show that the small coupling regime also provides examples for which transport may exceed the dimension of the spectrum.

\begin{coro}\label{c.lastquestion}
For sufficiently small $\lambda > 0$, we have
$$
\tilde \alpha_u^{\pm} > \dim \Sigma_\lambda,
$$
uniformly in $\omega \in \T$.
\end{coro}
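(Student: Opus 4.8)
The plan is to derive Corollary~\ref{c.lastquestion} by combining the lower bound in Theorem~\ref{t.main} with a matching upper bound for $\dim \Sigma_\lambda$ in the weak coupling regime. Theorem~\ref{t.main} gives $\tilde\alpha_u^\pm \ge 1 - c\lambda^2$ for small $\lambda$, so it suffices to produce an estimate of the form $\dim \Sigma_\lambda \le 1 - c'\lambda^{\gamma}$ with either $\gamma < 2$, or $\gamma = 2$ and $c' > c$, again valid for all sufficiently small $\lambda > 0$. If such a bound holds, then for $\lambda$ small enough one has $1 - c\lambda^2 > 1 - c'\lambda^\gamma$, hence $\tilde\alpha_u^\pm > \dim\Sigma_\lambda$, and since neither estimate depends on $\omega$, the conclusion is uniform in $\omega \in \T$.

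The key input I would invoke is the small coupling asymptotics for the dimension of the spectrum established in \cite{DG09} (and refined in subsequent work): as $\lambda \to 0$, $\dim \Sigma_\lambda \to 1$, and more precisely the deviation from one is known to be of order $\lambda$ rather than $\lambda^2$. Concretely, the expansion reads $\dim \Sigma_\lambda = 1 - c_* \lambda + o(\lambda)$ for an explicit positive constant $c_*$ (coming from the analysis of the trace map and the associated hyperbolic dynamics near the free case). The point is simply that $\lambda$ beats $\lambda^2$ as $\lambda \downarrow 0$: for any fixed constants $c, c_* > 0$ one has $1 - c_*\lambda + o(\lambda) < 1 - c\lambda^2$ once $\lambda$ is small enough, since $c\lambda^2 = o(\lambda)$. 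So the two-term comparison is immediate once the linear-order upper bound on $\dim \Sigma_\lambda$ is in hand.

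The steps, in order, would be: (1) recall from \cite{DG09} the quantitative statement $\dim\Sigma_\lambda \le 1 - c_*\lambda + o(\lambda)$, or at least the weaker assertion that $\limsup_{\lambda\to0}(1 - \dim\Sigma_\lambda)/\lambda > 0$; (2) invoke Theorem~\ref{t.main} to get $\tilde\alpha_u^\pm \ge 1 - c\lambda^2$ uniformly in $\omega$; (3) observe that $1 - c\lambda^2 - \dim\Sigma_\lambda \ge c_*\lambda - c\lambda^2 + o(\lambda) > 0$ for all sufficiently small $\lambda > 0$; (4) note that both bounds are $\omega$-independent, giving the uniform conclusion.

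The main obstacle is not really a difficulty in the argument but rather locating and correctly citing the precise asymptotic form of $\dim\Sigma_\lambda$ in the weak coupling limit. If only $\dim\Sigma_\lambda \to 1$ is available without a rate, the comparison collapses, since one cannot rule out that $1 - \dim\Sigma_\lambda$ itself decays faster than $\lambda^2$. Thus the crux is that the known upper bound for the dimension of the spectrum decays strictly slower (linearly) than the gap $c\lambda^2$ in Theorem~\ref{t.main} — this is precisely the asymmetry exploited here, mirroring the large-coupling mechanism in which $\tilde\alpha_u^\pm$ decays like $1/\log\lambda$ while $\dim\Sigma_\lambda$ decays strictly faster by the factor $\hat c/2 < 1$.
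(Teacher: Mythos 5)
Your proposal is correct and follows essentially the same route as the paper: combine the quadratic lower bound $\tilde\alpha_u^\pm \ge 1 - c\lambda^2$ from Theorem~\ref{t.main} with the linear-order upper bound on the dimension of the spectrum, which decays strictly slower as $\lambda \downarrow 0$. The only slip is the citation: the bound $1 - C_1\lambda \le \dim\Sigma_\lambda \le 1 - C_2\lambda$ for small $\lambda$ is taken from \cite{DG11}, not \cite{DG09}.
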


\begin{proof}
It was shown in \cite{DG11} that there are constants $C_1,C_2 > 0$ such that
$$
1 - C_1 \lambda \le \dim \Sigma_\lambda \le 1 - C_2 \lambda
$$
for $\lambda > 0$ sufficiently small. Combining this with the quadratic lower bound for $\tilde \alpha_u^{\pm}$ provided by Theorem~\ref{t.main}, the result follows.
\end{proof}

\section{The Dynamical Formalism and the Key Lemma}\label{s.tracemap}

The central tool in the study of the Fibonacci Hamiltonian is the trace map. This was realized early on in the very first papers studying this model; see, for example, \cite{Cas, KKT, S87}. The numerous recent advances (e.g., \cite{Can, DG11, DG12, DG13}) have been made possible by the use of more sophisticated tools from dynamical systems theory to exploit this connection between spectral properties of the Fibonacci Hamiltonian and the dynamics of the trace map. As we mentioned in the introduction, the spectral issues behind the phenomenon of almost ballistic transport for the weakly coupled Fibonacci Hamiltonian are far from being understood precisely, and this has prevented the previous works on this problem from establishing lower bounds close to one for small $\lambda$. Here we propose a way around this issue. We will not seek to establish precise estimates for the dimensional properties of spectral measures, but rather rely on the approach developed by Damanik and Tcheremchantsev (e.g., \cite{DT03, DT07, DT08}) that is based the Parseval formula, and hence on integrals with respect to Lebesgue measure, rather than integrals with respect to spectral measures. Especially the later papers \cite{DT07, DT08} emphasized complex analysis methods, and led to a study of the complex trace map. Distortion results then play a crucial role, and the necessary input is given by certain estimates of the derivatives of the entries of the trace map with respect to the energy. Since the energies where the derivatives are considered are real, the necessary input may be established by focusing on the real trace map. Consequently, in this section we recall the standard setting in which the real trace map is studied and prove the key lemma about the size of the derivatives we need later. This will provide the input to the discussion in the complex setting which will be presented in the next section.

\subsection{The Trace Map}

There is a fundamental connection between the spectral properties of the Fibonacci Hamiltonian and the dynamics of the \textit{trace map}
$$
T : \Bbb{R}^3 \to \Bbb{R}^3, \; T(x,y,z)=(2xy-z,x,y).
$$
The function $G(x,y,z) = x^2+y^2+z^2-2xyz-1$, sometimes called the ``Fricke-Vogt invariant,'' is invariant under the action of $T$, and hence $T$ preserves the family of cubic surfaces
$$
S_\lambda = \left\{(x,y,z)\in \Bbb{R}^3 : x^2+y^2+z^2-2xyz=1+ \frac{\lambda^2}{4} \right\}.
$$
It is therefore natural to consider the restriction $T_{\lambda}$ of the trace map $T$ to the invariant surface $S_\lambda$. That is, $T_{\lambda}:S_\lambda \to S_\lambda$, $T_{\lambda}=T|_{S_\lambda}$. We denote by $\Omega_{\lambda}$ the set of points in $S_\lambda$ whose full orbits under $T_{\lambda}$ are bounded (it is known that $\Omega_\lambda$ is equal to the non-wandering set of $T_\lambda$).

\subsection{Hyperbolicity of the Trace Map}

Recall that an invariant closed set $\Lambda$ of a diffeomorphism $f : M \to M$ is \textit{hyperbolic} if there exists a splitting of the tangent space $T_xM=E^u_x\oplus E^u_x$ at every point $x\in \Lambda$ such that this splitting is invariant under $Df$, the differential $Df$ exponentially contracts vectors from the stable subspaces $\{E^s_x\}$, and the differential of the inverse, $Df^{-1}$, exponentially contracts vectors from the unstable subspaces $\{E^u_x\}$. A hyperbolic set $\Lambda$ of a diffeomorphism $f : M \to M$ is \textit{locally maximal} if there
exists a neighborhood $U$ of $\Lambda$ such that
$$
\Lambda=\bigcap_{n\in\Bbb{Z}}f^n(U).
$$
It is known that for $\lambda > 0$, $\Omega_{\lambda}$ is a locally maximal hyperbolic set of $T_{\lambda} : S_\lambda \to S_\lambda$; see \cite{Can, Cas, DG09}.

\subsection{Properties of the Trace Map for $\lambda=0$}\label{ss.vequzero}

The surface
$$
\mathbb{S} = S_0 \cap \{ (x,y,z)\in \Bbb{R}^3 : |x|\le 1, |y|\le 1, |z|\le 1\}
$$
is homeomorphic to $S^2$, invariant under $T$, smooth everywhere except at the four points $P_1=(1,1,1)$, $P_2=(-1,-1,1)$, $P_3=(1,-1,-1)$, and $P_4=(-1,1,-1)$, where $\mathbb{S}$ has conic singularities, and the trace map $T$ restricted to $\mathbb{S}$ is a factor of the hyperbolic automorphism of $\T^2 = \R^2 / \Z^2$ given by
\begin{equation}\label{e.semiconj1}
\mathcal{A}(\theta, \varphi) = (\theta + \varphi, \theta)\ (\text{\rm mod}\ 1).
\end{equation}
The semi-conjugacy is given by the map
\begin{equation}\label{e.semiconj2}
F: (\theta, \varphi) \mapsto (\cos 2\pi(\theta + \varphi), \cos 2\pi \theta, \cos 2\pi \varphi).
\end{equation}
The map $\mathcal{A}$ is hyperbolic, and is given by the matrix $A = \begin{pmatrix} 1 & 1 \\ 1 & 0 \end{pmatrix}$, which has eigenvalues $\phi$ and $- \phi^{-1}$.

\subsection{Spectrum and Trace Map}

Denote by $\ell_\lambda$ the line
$$
\ell_\lambda = \left\{ \left(\frac{E-\lambda}{2}, \frac{E}{2}, 1 \right) : E \in \Bbb{R} \right\}.
$$
It is easy to check that $\ell_\lambda \subset S_\lambda$. An energy $E \in \Bbb{R}$ belongs to the spectrum $\Sigma_\lambda$ of the Fibonacci Hamiltonian if and only if the positive semiorbit of the point $(\frac{E-\lambda}{2}, \frac{E}{2}, 1)$ under iterates of the trace map $T$ is bounded; see S\"ut\H{o}~\cite{S87}. Moreover, the stable manifolds of points in $\Omega_\lambda$ intersect the line $\ell_\lambda$ transversally if $\lambda > 0$ is sufficiently small \cite{DG09} or if $\lambda \ge 16$ \cite{Cas}. It is an open problem whether this transversality condition holds for all $\lambda > 0$.

S\"ut\H{o}'s theorem considers the iterates of $\left(\frac{E-\lambda}{2}, \frac{E}{2}, 1 \right)$ under $T_\lambda$. We may write
$$
T_\lambda^k \left(\frac{E-\lambda}{2}, \frac{E}{2}, 1 \right) = \left( x_k(E), x_{k-1}(E), x_{k-2}(E) \right).
$$
It turns out that, for $k \ge 0$, $x_k(E)$ is equal to one-half the trace of the transfer matrix associated with the zero phase Fibonacci Hamiltonian from the origin to the site $F_k$, where $F_k$ denotes the $k$-th Fibonacci number; see \cite{S87}. From either description it follows that $x_k$ is a polynomial of degree $F_k$. It is known that all its zeros are real and simple. In the next subsection we will study the size of the derivative of $x_k$ at one of these zeros. As we will see later, this will be in direct relation to the bounds on the transport exponents we wish to prove.

\subsection{The Key Lemma}

In this subsection we will prove the following lemma, which provides bounds on the size of the derivative of $x_k$ at one of its zeros in the small coupling regime. These bounds are the key ingredient in our proof of Theorem~\ref{t.main}.

\begin{lemma}\label{l.keylemma}
There is $\lambda_1 > 0$ such that the following statements hold.

{\rm (a)} For every $\varepsilon > 0$ and every $\lambda \in (0,\lambda_1)$, there is  $k_0 \in \Z_+$ such that for every $k \ge k_0$, there exists $E_k \in \R$ such that $x_k(E_k) = 0$ and
\begin{equation}\label{e.keyestimate}
-\varepsilon + d(\lambda) < \frac1k \log |x_k'(E_k)| < \varepsilon + d(\lambda),
\end{equation}
where
\begin{equation}\label{e.glambdadef}
d(\lambda) = \frac{1}{6}\log\left(\left(\frac{\lambda^4}{2}+4\lambda^2+9\right)+(4+\lambda^2)\sqrt{\frac{\lambda^4}{2}+2\lambda^2+5}\right).
\end{equation}
We have
\begin{equation}\label{e.glambdaest}
\lambda^2 \lesssim 1 - \frac{\log \phi}{d(\lambda)} \lesssim \lambda^2.
\end{equation}

{\rm (b)} For every $\lambda \in (0,\lambda_1)$ and every $\delta \in (0,\frac{\lambda^2}{4})$, the following holds. If $b_k \subset \R$ is the connected component of $x_k^{-1}([-1-\delta,1+\delta])$ that contains $E_k$ {\rm (}from part~{\rm (a))}, then $E_k$ is the only zero of $x_k$ in $b_k$.
\end{lemma}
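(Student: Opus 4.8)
The plan is to realize $\tfrac1k\log|x_k'(E_k)|$ as (essentially) the average expansion rate of $DT_\lambda$ along the forward orbit of $\gamma_\lambda(E):=\big(\tfrac{E-\lambda}{2},\tfrac E2,1\big)$, and then to choose the zero $E_k$ so that this orbit follows an \emph{explicit} hyperbolic periodic orbit of $T_\lambda$ whose expansion rate is precisely $d(\lambda)$. Differentiating $T_\lambda^k\gamma_\lambda(E)=(x_k,x_{k-1},x_{k-2})(E)$ in $E$ gives $\big(x_k',x_{k-1}',x_{k-2}'\big)(E)=DT_\lambda^k|_{\gamma_\lambda(E)}\gamma_\lambda'(E)$ with $\gamma_\lambda'\equiv\big(\tfrac12,\tfrac12,0\big)\in TS_\lambda$; since $\ell_\lambda$ is transverse to the stable laminations of $\Omega_\lambda$ for small $\lambda$ (\cite{DG09}), $\gamma_\lambda'$ never points along a stable direction, so this derivative vector grows at the unstable rate along the orbit. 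The orbit to follow is the period-$6$ orbit
\[
\mathcal P_\lambda=\big\{(0,a_\lambda,0),(0,0,a_\lambda),(-a_\lambda,0,0),(0,-a_\lambda,0),(0,0,-a_\lambda),(a_\lambda,0,0)\big\},\qquad a_\lambda=\sqrt{\tfrac{\lambda^2}{4}+1},
\]
which lies on $S_\lambda$ and continues the period-$6$ orbit of $T$ on $\mathbb S$ through $(0,0,1)$ --- the orbit carrying the zero of $x_k$ near $E=0$ in the free case. Being bounded, $\mathcal P_\lambda\subset\Omega_\lambda$, hence it is a hyperbolic periodic orbit of $T_\lambda$.

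For part (a), first compute $DT_\lambda^6$ along $\mathcal P_\lambda$ --- a product of six explicit $3\times3$ matrices. Its restriction to $TS_\lambda$ has determinant $1$ and trace $b^4+2$ with $b^2=4+\lambda^2$, hence expanding eigenvalue $\mu(\lambda)=\tfrac12\big((b^4+2)+b^2\sqrt{b^4+4}\big)$; setting $d(\lambda)=\tfrac16\log\mu(\lambda)$ gives, after elementary rearrangement, the expression \eqref{e.glambdadef}, with $d(0)=\tfrac16\log\phi^6=\log\phi$. The estimate \eqref{e.glambdaest} then follows by Taylor expansion: $\mu(\lambda)=\phi^6+c_1\lambda^2+O(\lambda^4)$ with $c_1>0$, hence $d(\lambda)=\log\phi+c_2\lambda^2+O(\lambda^4)$ with $c_2>0$, so $1-\tfrac{\log\phi}{d(\lambda)}=\tfrac{d(\lambda)-\log\phi}{d(\lambda)}\asymp\lambda^2$.

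To construct $E_k$: at $\lambda=0$ the line $\ell_0$ meets the periodic point $(0,0,1)\in W^s(\mathcal P_0)$ transversally (transversally because $|x_k'|\sim\phi^k$ in the free case, so $\gamma_0'\notin E^s$ there), and this transverse intersection persists, so for small $\lambda>0$ there is $E^*\in\Sigma_\lambda$ near $0$ with $\gamma_\lambda(E^*)\in W^s(\mathcal P_\lambda)$. Using the Markov/cookie-cutter coding of $\Sigma_\lambda$ relative to the transverse curve $\ell_\lambda$ (available for small $\lambda$ by \cite{DG09}) and the fact that the defining period-$6$ loop of $\mathcal P_\lambda$ is a cycle of the Markov graph through which one may pass with any fixed bounded prefix and suffix, pick for each large $k$ the level-$k$ band $\beta_k$ with itinerary \emph{(bounded prefix)(the $6$-periodic word of $\mathcal P_\lambda$, repeated)(bounded suffix ending at a symbol whose Markov rectangle straddles $\{x=0\}$)}; then $x_k$ has a zero $E_k\in\beta_k$. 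By bounded distortion for the locally maximal hyperbolic set $\Omega_\lambda$ (with the transverse curve $\ell_\lambda$), $\log|x_k'(E_k)|$ equals the expansion of $DT_\lambda^k$ along $\mathcal P_\lambda$ over the periodic part up to a term bounded uniformly in $k$, i.e.\ $\log|x_k'(E_k)|=k\,d(\lambda)+O(1)$; and since at each of the four points of $\mathcal P_\lambda$ with vanishing first coordinate the unstable direction has nonzero first component, it is genuinely $|x_k'(E_k)|$ --- not merely $\|(x_k',x_{k-1}',x_{k-2}')(E_k)\|$ --- that grows at rate $d(\lambda)$. This yields \eqref{e.keyestimate} for all $k\ge k_0(\varepsilon,\lambda)$.

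For part (b) it suffices to show that the component $b_k$ of $x_k^{-1}([-1-\delta,1+\delta])$ containing $E_k$ carries no critical point of $x_k$ when $\delta<\tfrac{\lambda^2}{4}$; then $x_k$ is strictly monotone on $b_k$, so $E_k$ is its only zero there. This rests on a band-separation estimate for $x_k$ near $E_k$, whose threshold $\tfrac{\lambda^2}{4}$ is exactly the shift $a_\lambda^2-1$ of the relevant coordinate on $S_\lambda$, and which follows from the Fricke-Vogt invariant $x_k^2+x_{k-1}^2+x_{k-2}^2-2x_kx_{k-1}x_{k-2}=1+\tfrac{\lambda^2}{4}$ along $\ell_\lambda$ together with its rearrangement $(x_{k-2}-x_kx_{k-1})^2=(x_k^2-1)(x_{k-1}^2-1)+\tfrac{\lambda^2}{4}$ (at a critical point, $x_k'=0$ pins the orbit point $(x_k,x_{k-1},x_{k-2})$ and pushes $x_k^2$ above $1+\tfrac{\lambda^2}{4}$). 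The six-fold matrix product, the Taylor expansion, and this invariant computation are routine; the real obstacle is the bookkeeping in the construction of $E_k$ in part (a) --- pinning $\tfrac1k\log|x_k'(E_k)|$ to $d(\lambda)$ with uniform two-sided control --- which depends on the hyperbolicity of $\Omega_\lambda$, the transversality of $\ell_\lambda$ to its stable laminations and the resulting Markov structure on $\Sigma_\lambda$ (\cite{Can, Cas, DG09}), the bounded-distortion estimates for the entries of the trace map as functions of $E$ referred to at the start of this section, and the non-degeneracy of the first coordinate along the relevant unstable directions.
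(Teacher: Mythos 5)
Your part (a) is in essence the paper's argument: you shadow the same explicit period-$6$ orbit through $(0,0,\sqrt{1+\lambda^2/4})$, your $6$-fold multiplier computation (trace $16a^4+2$, determinant $1$, expanding eigenvalue $8a^4+1+4a^2\sqrt{4a^4+1}$) agrees with the paper's Lemma~\ref{l.mult}, and the Taylor expansion giving \eqref{e.glambdaest} is the same. (Incidentally, this computation gives $\lambda^4/4$, not $\lambda^4/2$, inside the square root of \eqref{e.glambdadef}; that appears to be a typo in the paper and does not affect \eqref{e.glambdaest}.) Where you differ is the implementation: the paper avoids any Markov coding of the spectrum and instead proves a clean abstract statement (Lemma~\ref{l.keylemma2}) via the Inclination Lemma plus linearization at the saddle, using only transversality of $\ell_\lambda$ to $W^s$ and of $W^u$ to the level curve. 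Your route through a cookie-cutter coding, a band with a prescribed itinerary containing a zero of $x_k$, and uniform bounded distortion along $\ell_\lambda$ is plausible but each of those ingredients is asserted rather than proved, and none of them is literally off the shelf in \cite{DG09}; the paper's version needs none of this machinery.

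The genuine gap is part (b). Your mechanism is that the Fricke--Vogt invariant forces $x_k^2>1+\tfrac{\lambda^2}{4}$ at any critical point of $x_k$, hence no critical point lies in $b_k$. First, even if that inequality held it would be quantitatively insufficient: $x_k^2>1+\tfrac{\lambda^2}{4}$ only gives $|x_k|>\sqrt{1+\lambda^2/4}\approx 1+\lambda^2/8$, which does not exclude critical points from the component of $x_k^{-1}([-1-\delta,1+\delta])$ once $\delta\in[\lambda^2/8,\lambda^2/4)$. Second, as a statement about critical values in general it is false, and your derivation makes no use of which zero $E_k$ is: already $x_2(E)=\tfrac{E(E-\lambda)}{2}-1$ has critical value $-(1+\tfrac{\lambda^2}{8})$, so the two bands of $\{|x_2|\le 1+\delta\}$ merge for every $\delta\ge\lambda^2/8$; band near-touchings of this kind persist at all levels of the Fibonacci hierarchy, so no argument that ignores the special position of $E_k$ can give the threshold $\lambda^2/4$ (the invariant is an identity in $E$ and says nothing at a point just because $x_k'=0$ there --- ``pins the orbit point'' is not a step one can carry out). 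The paper's proof of (b) is genuinely geometric and tied to the shadowing from part (a): the arc $T_\lambda^k(\ell_\lambda(b_k))$ is $C^1$-close (Inclination Lemma) to a ``good'' piece of the unstable manifold of the $6$-cycle, chosen to connect the top and bottom of the cube $C_\delta=\{|x|,|y|,|z|\le 1+\delta\}$ and to cross $L(\lambda)=S_\lambda\cap\{z=0\}$ exactly once; the hypothesis $\delta<\lambda^2/4$ enters through the geometry of $S_\lambda\cap\{z=\pm\lambda^2/4\}$ near the periodic points, which keeps the continued good pieces crossing the slab exactly once. You would need to supply an argument of this kind (or otherwise exploit the itinerary of $E_k$) for part (b); as written, that half of the lemma is unproved.
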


The following lemma will be used in the proof of Lemma~\ref{l.keylemma}.

\begin{lemma}\label{l.keylemma2}
Let $f : M^2 \to M^2$ be a $C^2$-diffeomorphism with a fixed hyperbolic saddle point $s \in M^2$. Suppose that $\ell : \R \to M^2$ is a line parametrized by the parameter $E \in \R$. Let $g:M^2\to \mathbb{R}$ be a smooth function such that the level curve $L=\{g=0\}$ is non-singular. Suppose also that the level curve $L \subset M^2$ is  such that the stable manifold $W^s(s)$ intersects $\ell(\R)$ transversally and the unstable manifold $W^u(s)$ intersects $L$ transversally. Then, for all sufficiently large $k \in \Z_+$, there exists $E_k \in \R$ such that $f^k(\ell(E_k)) \in L$ and
$$
\lim_{k \to \infty} \frac1k \log \left| \frac{d}{dE} g(f^k(\ell(E)) )\Big|_{E_k} \right| = \log \mu^u,
$$
where $\mu^u$ is the unstable multiplier of $s$.
\end{lemma}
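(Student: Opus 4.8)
The plan is to prove Lemma~\ref{l.keylemma2} by reducing the derivative of $E \mapsto g(f^k(\ell(E)))$ at the relevant point to a product of differentials along the orbit and then controlling the growth of this product via hyperbolicity near the saddle $s$. First I would set up the geometry: since $W^s(s)$ meets $\ell(\R)$ transversally, pick a point $p_0 = \ell(E_0) \in W^s(s)$; then $f^k(\ell(E))$ starts (for $E$ near $E_0$) at a small piece of curve near $f^k(p_0)$, which converges to $s$ as $k \to \infty$. By the inclination (lambda-) lemma, the images $f^k(\ell(I))$ of a small interval $I \ni E_0$ accumulate, in the $C^1$ topology, on the local unstable manifold $W^u_{\mathrm{loc}}(s)$. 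Since $W^u(s)$ meets $L = \{g=0\}$ transversally, there is a point $q \in W^u(s) \cap L$, and pulling $q$ back along $W^u$ close to $s$, the $C^1$-convergence of $f^k(\ell(I))$ to $W^u_{\mathrm{loc}}(s)$ guarantees that for large $k$ the curve $f^k(\ell(I))$ crosses $L$ transversally at a point $f^k(\ell(E_k))$ with $E_k \to E_0$; this produces the asserted $E_k$.

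Next I would quantify the derivative. Write the chain rule
$$
\frac{d}{dE} g(f^k(\ell(E)))\Big|_{E_k} = \nabla g\big(f^k(\ell(E_k))\big) \cdot Df^k_{\ell(E_k)}\, \ell'(E_k).
$$
The vector $\ell'(E_k)$ is bounded away from $0$ and $\infty$, $\nabla g$ is bounded and bounded away from $0$ near $q$ (as $L$ is non-singular there), so the asymptotics of $\frac1k \log|\cdots|$ are governed entirely by $\frac1k \log \| Df^k_{\ell(E_k)} v_k\|$ where $v_k = \ell'(E_k)/\|\ell'(E_k)\|$. Now decompose the orbit of length $k$ into a bounded-length initial segment bringing $\ell(E_k)$ into a small neighborhood $U$ of $s$, a long middle segment of length $k - O(1)$ spent inside $U$, and a bounded-length final segment from $U$ out to a neighborhood of $q$; only the middle segment contributes to the exponential rate. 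Inside $U$, in adapted coordinates where $Df_s$ is (nearly) diagonal with multipliers $\mu^u$ (with $|\mu^u|>1$) and $\mu^s$ (with $|\mu^s|<1$), the key point is that the tangent direction $v_k$ of the transported curve is, after the initial segment, already close to the unstable direction — this is exactly the content of the inclination lemma — so $Df^k v_k$ grows like $(\mu^u)^{k}$ up to subexponential factors. Hence $\frac1k \log\|Df^k_{\ell(E_k)} v_k\| \to \log|\mu^u|$, and since we may assume (as in the application) $\mu^u > 0$, we get $\log \mu^u$.

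To make the middle-segment estimate rigorous I would use a standard cone-field argument: fix $\eta > 0$ small, choose $U$ so that $Df$ preserves an $\eta$-unstable cone field on $U$ and expands vectors in it by a factor in $(\mu^u - \eta', \mu^u + \eta')$, while $Df^{-1}$ preserves a stable cone and expands stable vectors; the inclination lemma ensures that after a bounded number $N_0$ of iterates (depending on $\eta$ and on the transversality of $W^s(s)$ with $\ell$, but not on $k$) the vector $Df^{N_0} v_k$ lies in the unstable cone, and it stays there and is multiplied by a near-$\mu^u$ factor at each of the remaining $k - O(1)$ steps. Taking logs, dividing by $k$, and letting first $k \to \infty$ and then $\eta \to 0$ yields the limit. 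The main obstacle — and the only place requiring genuine care — is the uniformity of the inclination lemma in $k$: one must check that the number of iterates needed to align the tangent direction of $f^k(\ell(I))$ with the unstable cone is bounded independently of $k$, and that $E_k$ can be chosen so that the whole forward orbit $\{f^j(\ell(E_k))\}_{j=0}^{k}$ stays in a fixed compact neighborhood of $W^s_{\mathrm{loc}}(s) \cup \{s\} \cup W^u_{\mathrm{loc}}(s)$; both follow from the transversality hypotheses together with continuity, but they are the crux of the argument. Everything else (boundedness of $\nabla g$, of $\ell'$, and the finitely many ``transition'' derivative factors) contributes only $O(1)$ to $\log|\frac{d}{dE} g(f^k(\ell(E)))|$ and hence vanishes after dividing by $k$.
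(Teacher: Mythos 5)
Your proposal follows essentially the same route as the paper's proof: the Inclination Lemma produces the intersection points $E_k$ (via $C^1$-convergence of $f^k(\ell(I))$ to an arc of $W^u(s)$ through $q$), the derivative is decomposed into bounded initial and final segments plus a long excursion near the saddle contributing $(\mu^u)^{k-O(1)}$, and the transversality at $q$ supplies the $O(1)$ gradient/angle factors. The only difference is technical: you run the middle-segment estimate with an unstable cone field, whereas the paper uses a local linearization $H \circ f = A \circ H$ near $s$; both yield the same exponential rate, so your argument is sound.
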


\begin{proof}
Set $r = W^s(s) \cap \ell(\R)$ and $q = W^u(s) \cap L$. Fix an arc $J$ of $W^u(s)$ that contains both $s$ and $q$. Due to the Inclination Lemma (see, e.g., \cite[Lemma~1.1]{P69}), for all sufficiently large $k \in \Z_+$, there is a small interval $I_k \subset \ell(\R)$, such that $r \in I_k$ and $f^k(I_k) \to J$ in the $C^1$ topology. This implies that for each large $k$, there exists a point $p_k \in f^k(I_k) \cap L$. Denote $p_k' = f^{-k}(p_k) \in I_k \subset \ell$, and set $E_k = \ell^{-1}(p_k')$.

Let us estimate $|\frac{d}{dE} g(f^k(\ell(E)))|_{E_k}|$. Consider a neighborhood $V$ of $s$ such that a linearizing coordinate exists in $V$, that is, there is a diffeomorphism $H : V \to U \subset \R^2 = \R_x \times \R_y$ such that $H(s) = 0$, $H(W^s_\mathrm{loc}(s) \cap V) = \R_y \cap U$, $H(W^u_\mathrm{loc} \cap V) = \R_x \cap U$, and $H \circ f = A \circ H$ with
$$
A = \begin{pmatrix} \mu^u & 0 \\ 0 & \mu^s \end{pmatrix},
$$
where $\mu^u$, $\mu^s$ are the unstable and stable multipliers of $Df$ at $s$, respectively. Choose $k_1,k_2 \in \Z_+$ such that $f^{k_1}(r) \in V$ and $f^{-k_2}(q) \in V$. Set
$$
K = \max (\|H\|_{C^1}, \|H^{-1}\|_{C^1}), \ F = \max (\|f\|_{C^1}, \|f^{-1}\|_{C^1}).
$$
Let us denote by $\gamma$ the angle between $L$ and $W^u(s)$ at $q$. Notice that $\gamma \ne 0$ (since $L\pitchfork W^u(s)$ at $q$). Then for large enough $k \in \mathbb{N}$, we have
$$
\left| \frac{d}{dE} g(f^k(\ell(E))) \Big|_{E_k} \right| \ge F^{-k_1} \cdot K^{-1} \cdot \left( \mu^u \right)^{k - k_1 - k_2} \cdot K^{-1} \cdot F^{-k_2}\cdot \frac{1}{2}\left|\text{\rm grad\ }g(q)\right|\cdot |\sin \gamma|
$$
and
$$
\left| \frac{d}{dE} g(f^k(\ell(E))) \Big|_{E_k} \right| \le F^{k_1} \cdot K \cdot \left( \mu^u \right)^{k - k_1 - k_2} \cdot K \cdot F^{k_2} \cdot 2 \left|\text{\rm grad\ }g(q)\right|, %\left| \frac{d}{dE} g(f^n(\ell(E))) \Big|_{E_n} \right| \le \|f\|_{C^1}^{n_1} \cdot \|H\|_{C^1} \cdot \left( \mu^u \right)^{n - n_1 - n_2} \cdot \|H^{-1}\|_{C^1} \cdot \|f\|_{C^1}^{n_2},
$$
and the result follows.
\end{proof}

\begin{figure}[htb]
\includegraphics[width=0.9\textwidth]{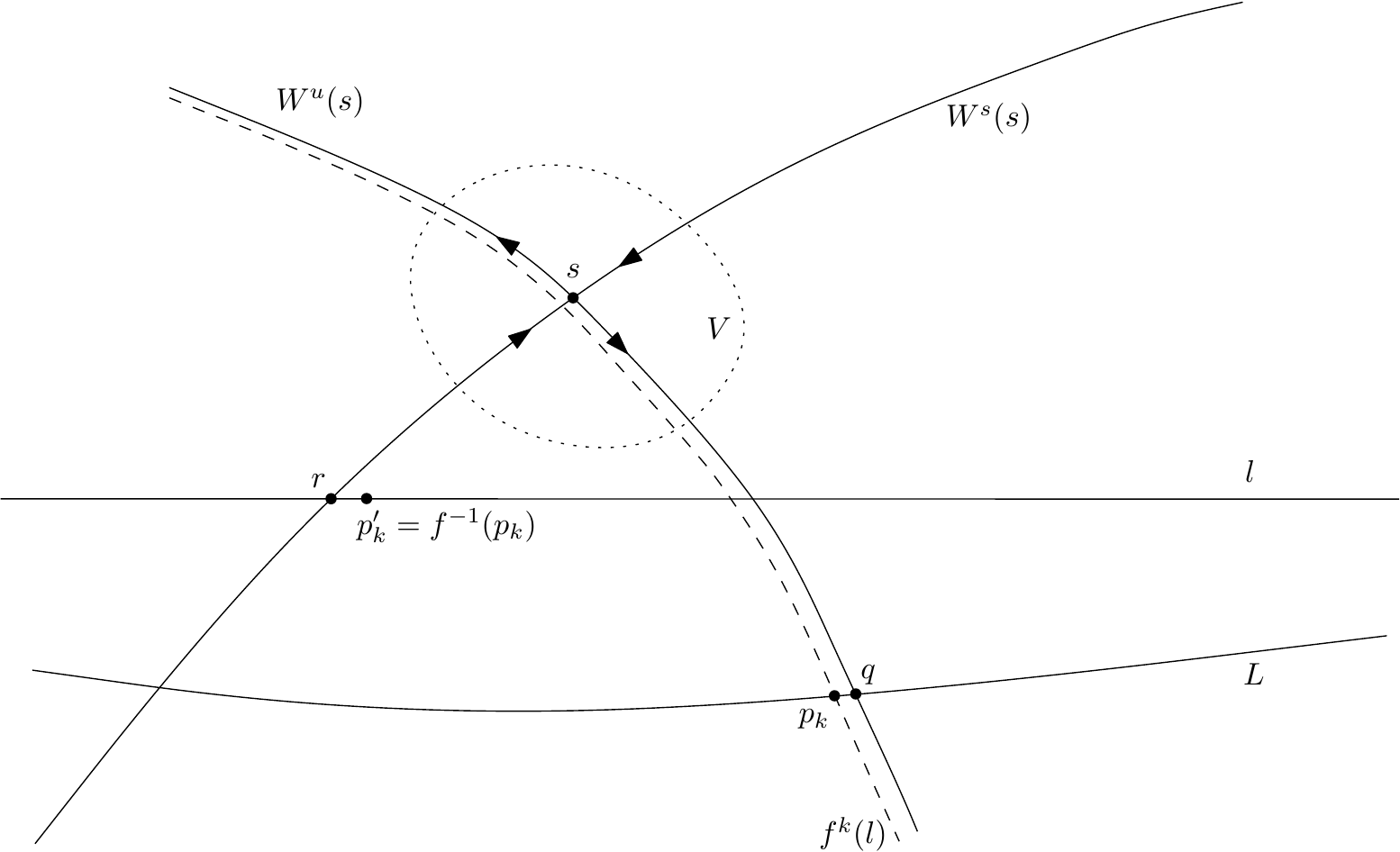}
\caption{Proof of Lemma \ref{l.keylemma2}}\label{fig.1}
\end{figure}

\begin{lemma}\label{l.mult}
Denote by $\mu^u(\lambda)$ the largest multiplier {\rm (}averaged over the orbit{\rm )} of the periodic orbit $\left(0, 0, \sqrt{1 + \frac{\lambda^2}{4}}\right)\in S_\lambda$ of the map $T_\lambda:S_\lambda\to S_\lambda$. Then
$$
\log \mu^u(\lambda)=\frac{1}{6}\log\left(\left(\frac{\lambda^4}{2}+4\lambda^2+9\right)+(4+\lambda^2)\sqrt{\frac{\lambda^4}{2}+2\lambda^2+5}\right)
$$
\end{lemma}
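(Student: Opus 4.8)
The plan is to compute the derivative cocycle of the trace map along this orbit by hand and extract its eigenvalues. Writing $z=\sqrt{1+\frac{\lambda^2}{4}}$ and applying $T(x,y,z)=(2xy-z,x,y)$ to $q_1:=(0,0,z)$, one obtains the cycle
$$
q_1=(0,0,z)\,\mapsto\,(-z,0,0)\,\mapsto\,(0,-z,0)\,\mapsto\,(0,0,-z)\,\mapsto\,(z,0,0)\,\mapsto\,(0,z,0)\,\mapsto\,q_1,
$$
so the orbit has period $6$; equivalently, $q_1$ is a fixed point of $T^6$ on $S_\lambda$. Hence $\mu^u(\lambda)$ is the sixth root of the larger eigenvalue of $D(T^6)_{q_1}$ restricted to the invariant tangent plane $T_{q_1}S_\lambda$, and the lemma reduces to identifying that eigenvalue.

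Since $DT_{(x,y,z)}=\begin{pmatrix}2y & 2x & -1\\ 1 & 0 & 0\\ 0 & 1 & 0\end{pmatrix}$, I would evaluate $DT$ at the six points of the orbit — whose entries are only ever $0,\pm1,\pm2z$ — and multiply the six matrices in the order $D(T^6)_{q_1}=DT_{(0,z,0)}\,DT_{(z,0,0)}\,DT_{(0,0,-z)}\,DT_{(0,-z,0)}\,DT_{(-z,0,0)}\,DT_{(0,0,z)}$. A direct (if slightly tedious) computation gives
$$
D(T^6)_{q_1}=\begin{pmatrix}16z^4-4z^2+1 & 8z^3 & 0\\ 8z^3 & 4z^2+1 & 0\\ 0 & 0 & 1\end{pmatrix}.
$$
The block structure is exactly what we want: the third coordinate axis points along $\nabla G(q_1)=(0,0,2z)$, i.e.\ transversally to $S_\lambda$, and $D(T^6)$ fixes it pointwise (as it must, since $G\circ T=G$), while the $2\times2$ block $N$ on the first two coordinates is precisely the linearization of the return map $T_\lambda^6$ at $q_1$ on $T_{q_1}S_\lambda=\{z=0\}$.

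It then remains to read the multipliers off $N$. From $\operatorname{tr}N=16z^4+2$ and $\det N=1$ (the latter consistent with $\det DT\equiv-1$, so that $\det D(T^6)=1$), the eigenvalues of $N$ are $(8z^4+1)\pm 4z^2\sqrt{4z^4+1}$, the larger being the unstable multiplier of the orbit since $8z^4+1\ge 9$. Taking its sixth root and substituting $z^2=1+\frac{\lambda^2}{4}$ — so that $8z^4+1=\frac{\lambda^4}{2}+4\lambda^2+9$ and $4z^2=4+\lambda^2$ — and simplifying the quantity under the inner square root yields the asserted formula for $\log\mu^u(\lambda)$. As a sanity check, at $\lambda=0$ this gives $\mu^u(0)^6=9+4\sqrt5=\phi^6$, i.e.\ $\mu^u(0)=\phi$, in agreement with the semiconjugacy \eqref{e.semiconj1}--\eqref{e.semiconj2} between $T|_{\mathbb S}$ and the hyperbolic automorphism with expanding eigenvalue $\phi$.

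There is no real obstacle here; the one step demanding care is the sixfold product of $3\times3$ matrices, and even that is light because the factors are sparse, with the bonus that the product turns out block diagonal — which both localizes the transverse eigenvalue $1$ and hands us the $2\times2$ Poincar\'e block whose spectrum we are after.
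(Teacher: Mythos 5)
Your proof is correct and follows essentially the same route as the paper's own: identify the period-$6$ orbit of $\left(0,0,\sqrt{1+\lambda^2/4}\right)$, compute $D(T^6)$ there (the paper records exactly the block matrix you obtained, with $a=z$), and read off the unstable eigenvalue $8a^4+1+4a^2\sqrt{4a^4+1}$, taking the sixth root for the averaged multiplier. One small caveat: substituting $a^2=1+\frac{\lambda^2}{4}$ gives $4a^4+1=5+2\lambda^2+\frac{\lambda^4}{4}$, so your computation actually yields $\frac{\lambda^4}{4}$ under the inner square root rather than the $\frac{\lambda^4}{2}$ printed in the lemma (an apparent typo in the statement, harmless for the asymptotics \eqref{e.glambdaest}), so it is not quite accurate to say the substitution ``yields the asserted formula'' without flagging this.
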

\begin{proof}%[Proof of Lemma~\ref{l.mult}.]
For $a \in \R$, note that $T^6(0,0,a) = (0,0,a)$. On $S_\lambda$, consider the corresponding six-cycle for $a^2 = 1 + \frac{\lambda^2}{4}$. We have
$$
DT^6(0,0,a) = \begin{pmatrix} 16 a^4 - 4 a^2 + 1 & 8 a^3 & 0 \\ 8 a^3 & 4 a^2 + 1 & 0 \\ 0 & 0 & 1 \end{pmatrix}.
$$

An explicit calculation shows that the matrix $DT^6(0,0,a)$ has eigenvalues
$$
\{1, 8a^4+1+ 4a^2\sqrt{4a^4+1}, 8a^4+1- 4a^2\sqrt{4a^4+1}\}.
$$
This proves Lemma \ref{l.mult}.
\end{proof}

\begin{remark}\label{r.zerovalue}
For $a = 1$ {\rm (}i.e., $\lambda = 0${\rm )}, we have
$$
DT^6(0,0,1) = \begin{pmatrix} 13 & 8 & 0 \\ 8  & 5 & 0 \\ 0 & 0 & 1 \end{pmatrix},
$$
which has eigenvalues $\{ 1 , 9 - 4 \sqrt{5} , 9 + 4 \sqrt{5} \}$.

In particular, $\log \mu^u(\lambda)\to \log \phi$ as $\lambda\to 0$.
\end{remark}

\begin{proof}[Proof of Lemma~\ref{l.keylemma}.]
(a) Let us apply Lemma~\ref{l.keylemma2} with $M^2 = S_\lambda$, $f = T_\lambda$, and
%\begin{align*}
%s & = \left( 0, 0, \sqrt{1 + \frac{\lambda^2}{4} }\right), \\
%\ell(E) & = \left( \frac{E - \lambda}{2}, \frac{E}{2} , 1 \right), \\
%g  (x, y, z) & =z, \\
%L = L(\lambda) & = S_\lambda \cap \{ (x, y, 0) : x,y \in \R \}.
%\end{align*}
$$
s = \left( 0, 0, \sqrt{1 + \frac{\lambda^2}{4} }\right),
$$
$$
\ell(E) = \left( \frac{E - \lambda}{2}, \frac{E}{2} , 1 \right),
$$
$$
g  (x, y, z) =z,
$$
$$
L = L(\lambda) = S_\lambda \cap \{ (x, y, 0) : x,y \in \R \}.
$$
For all sufficiently small $\lambda > 0$, the transversality conditions $W^s(s) \pitchfork \ell(\R)$ and $W^u(s) \pitchfork L$ hold. For these values of $\lambda$, the assertion of part (a) follows from Lemma~\ref{l.keylemma2} combined with Lemma \ref{l.mult} and Remark~\ref{r.zerovalue}.

(b) For the proof of this part, we need to use more specific properties of the trace map $T_\lambda$. Consider first the map $T_0 : \mathbb{S} \to \mathbb{S}$ and the periodic point $(0,0,1) \in \mathbb{S}$. The corresponding $6$-cycle arises from the $6$-cycle
$$
\Big( \frac14 , 0 \Big) \mapsto \Big( \frac14 , \frac14 \Big) \mapsto \Big( \frac12 , \frac14 \Big) \mapsto \Big( \frac34 , \frac12 \Big) \mapsto \Big( \frac14 , \frac34 \Big) \mapsto \Big( 0 , \frac14 \Big) \mapsto \Big( \frac14 , 0 \Big)
$$
of the map \eqref{e.semiconj1} under the semi-conjugacy \eqref{e.semiconj2}. Denote the points in the $T_0$-orbit of $(0,0,1)$ by $s_i(0)$, $i = 1, \ldots, 6$. They all belong to the cube $C_0 = \{ (x,y,z) : |x|, |y|, |z| \le 1 \}$. For each $s_i(0)$, choose a piece of $W^u(s_i(0))$ that contains $s_i(0)$, intersects $L(0) = \mathbb{S} \cap \{ (x, y, 0) : x,y \in \R \}$ exactly once, and connects the top of the cube $C_0$ with the bottom of the cube $C_0$. Such a piece of $W^u(s_i(0))$ will be called \emph{good}. Note that a good piece of $W^u(s_i(0))$ intersects $L(0)$ transversally.

Now consider the map $T_\lambda : S_\lambda \to S_\lambda$ for $\lambda > 0$. The point $(0,0,\frac{\lambda^2}{4})$ gives rise to a $6$-cycle consisting of the point $s_i(\lambda)$, $i = 1, \ldots, 6$. A finite piece of an invariant manifold will be changing continuously in $\lambda$, and hence we can consider a continuation of each of the good pieces of $W^u(s_i(0))$ chosen above. These pieces of $W^u(s_i(\lambda))$ will still be good for $\lambda > 0$ sufficiently small in the following sense: they intersect $L(\lambda) = S_\lambda \cap \{ (x, y, 0) : x,y \in \R \}$ exactly once (and this intersection is transversal) and they connect the top of the cube $C_\delta = \{ (x,y,z) : |x|, |y|, |z| \le 1+\delta \}$ with the bottom of the cube $C_\delta$. (Here we use that, by assumption, we consider $\delta \in (0,\frac{\lambda^2}{4})$ and that the intersection of the plane $\{ (x,y,z) : z = \frac{\lambda^2}{4} \}$ (resp., $\{ (x,y,z) : z = -\frac{\lambda^2}{4} \}$) and $S_\lambda$ consists of a pair of lines that are transversal to the continuations of the good pieces of unstable manifolds.)

Now apply Lemma~\ref{l.keylemma2} to the line $\ell_\lambda : \R \to S_\lambda$ and choose the interval $b_k \subset \R$ that corresponds to the intersection of $T_\lambda^k(\ell_\lambda)$ with $L(\lambda)$ and is close to one of the arcs of the unstable manifolds. The image $T_\lambda^k(\ell_\lambda(b_k))$ is exactly the part of $T_\lambda^n(\ell_\lambda(\R))$ that connects the top and the bottom of the cube $C_\delta$, and is $C^1$-close to the corresponding continuation of a good arc. This implies that $T_\lambda^k(\ell_\lambda(b_k))$ intersects $L(\lambda)$ at exactly one point and part~(b) follows.
\end{proof}

\section{The Damanik-Tcheremchantsev Setup and the Conclusion of the Proof of Theorem~\ref{t.main}}\label{s.qdynamics}

The purpose of this section is to show how Theorem~\ref{t.main} may be established with the help of the key dynamical lemma, Lemma~\ref{l.keylemma}. To this end, we first recall the general setup that was used in \cite{DT07, DT08} and then show how the estimate from Lemma~\ref{l.keylemma} enters the argument.

The Parseval identity implies (see, e.g., \cite[Lemma~3.2]{kkl})
\begin{equation}\label{e.parsform}
2\pi \int_0^{\infty} e^{-2t/T} | \langle \delta_n , e^{-itH} \delta_0 \rangle |^2 \, dt = \int_{-\infty}^\infty \left|\langle \delta_n  , (H - E - \tfrac{i}{T})^{-1} \delta_0 \rangle \right|^2 \, dE,
\end{equation}
and hence for the time averaged outside probabilities, defined by
\begin{equation}\label{e.taop}
\langle P(N,\cdot) \rangle (T) = \frac{2}{T} \int_0^{\infty} e^{-2t/T} \sum_{|n| \ge N} | \langle \delta_n , e^{-itH} \delta_0 \rangle |^2 \, dt,
\end{equation}
we have
\begin{equation}\label{e.taopresform}
\langle P(N,\cdot) \rangle (T) = \frac{1}{\pi T} \sum_{|n| \ge N} \int_{-\infty}^\infty \left|\langle \delta_n  , (H - E - \tfrac{i}{T})^{-1} \delta_0 \rangle \right|^2 \, dE.
\end{equation}
The right-hand side of \eqref{e.taopresform} may be studied by means of transfer matrices at complex energies, which are defined as follows. For $z \in \C$, $n \in \Z$, we set
$$
M(n;\omega,z) = \begin{cases} T(n;\omega,z) \cdots T(1;\omega,z) & n \ge 1, \\ T(n;\omega,z)^{-1} \cdots T(-1;\omega,z)^{-1} & n \le -1, \end{cases}
$$
where
$$
T(\ell;\omega,z) = \begin{pmatrix} z - \lambda \chi_{[1-\alpha,1)}(\ell \alpha + \omega \!\!\!\! \mod 1) & -1 \\ 1 & 0 \end{pmatrix}.
$$

For $\delta \ge 0$, consider the sets
$$
\sigma_k^\delta = \{ z \in \C : |x_k(z)| \le 1 + \delta \}.
$$

The following is \cite[Proposition~2]{DT08}:

\begin{prop}\label{p.tmest}
For every $\lambda , \delta > 0$, there are constants $C,\xi$ such that for every $k$, every $z \in \sigma_k^\delta$, and every $\omega \in \T$, we have
\begin{equation}\label{powerlaw}
\| M(n;\omega,z) \| \le C n^\xi.
\end{equation}
for $1 \le n \le F_k$.
\end{prop}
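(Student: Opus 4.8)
The statement to be proved is Proposition~\ref{p.tmest}, which asserts a polynomial bound $\|M(n;\omega,z)\| \le Cn^\xi$ for $z \in \sigma_k^\delta$ and $1 \le n \le F_k$. Since this is quoted verbatim as \cite[Proposition~2]{DT08}, the honest plan is to recall its proof from that source. The underlying mechanism is the following. The half-line transfer matrices over the Fibonacci word at Fibonacci times $F_j$ are governed by the trace recursion: writing $M_j(z) = M(F_j;\omega=0,z)$, one has the renormalization $M_{j+1} = M_{j-1} M_j$ for the zero-phase case, and the traces $x_j(z) = \tfrac12 \operatorname{tr} M_j(z)$ obey exactly the trace map iteration encoded by $T$. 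The key point is that the condition $z \in \sigma_k^\delta$, i.e.\ $|x_k(z)| \le 1+\delta$, together with the Cayley--Hamilton relation $M_j^2 - 2x_j M_j + I = 0$, forces the partial products $M_j$ for $j \le k$ to stay \emph{controlled} — not bounded, but growing at most polynomially in $F_j$, hence at most polynomially in $F_k$.

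The key steps, in order, are as follows. First I would reduce from general $\omega$ to $\omega=0$: a shift in the phase only changes which finite subword of the Sturmian sequence one reads, and by the hierarchical (substitution) structure of the Fibonacci word, any transfer matrix $M(n;\omega,z)$ with $n \le F_k$ can be written as a bounded-length product of the zero-phase blocks $M_j(z)$, $j \le k$, together with their inverses; the number of blocks is $O(k) = O(\log F_k)$ by the Zeckendorf-type decomposition of $n$. Second, for the zero-phase blocks one proves by induction on $j$ that $\|M_j(z)\| \le C' F_j^{\xi'}$ for all $j \le k$ whenever $z \in \sigma_k^\delta$. The inductive step uses the identity $M_{j+1} = M_{j-1} M_j$ and the two-term linear recursion satisfied by $M_j$ as a consequence of Cayley--Hamilton: $M_{j+1} = 2x_{j-1} M_{j-1} M_{j-2}\cdots$ — more precisely one tracks the scalar sequence of traces, using that $|x_j| \le 1+\delta$ propagates down from $j = k$ (this is where $\sigma_k^\delta$, defined via $x_k$, does its work, since boundedness of $x_k$ controls $x_j$ for $j<k$ along the orbit segment), and one estimates the norm growth of a product of $SL(2,\R)$ (or $SL(2,\C)$) matrices whose traces are all $\le 1+\delta$ in absolute value. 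Third, combining the block bound with the bounded-length decomposition of $M(n;\omega,z)$ and absorbing the $O(\log F_k)$ factors into the polynomial, one obtains $\|M(n;\omega,z)\| \le C n^\xi$ with constants depending only on $\lambda$ and $\delta$.

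The main obstacle is the second step: extracting polynomial (rather than exponential) growth of $\|M_j\|$ from the trace condition. A single matrix with trace bounded by $1+\delta$ can still have large norm (a parabolic-type shear), and products of such matrices can a priori grow exponentially; the saving feature, special to the Fibonacci renormalization, is that the \emph{commutation/anticommutation} structure of consecutive blocks (encoded in identities like $M_{j+1}M_{j-1} = $ something close to $M_{j-1}M_{j+1}$ modulo controlled error, and the Fricke--Vogt invariant constraining the three traces simultaneously) prevents the shears from compounding. Making this precise is exactly the content of \cite{DT08} (building on the Iochum--Testard and related estimates); in a self-contained write-up one would isolate the lemma that for $z$ with $|x_{j}(z)|,|x_{j+1}(z)|,|x_{j+2}(z)| \le 1+\delta$ the norms $\|M_j(z)\|$ grow at most geometrically in $j$ with a base that is sub-$\phi$, which translates to polynomial growth in $F_j$. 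Since the result is cited, I would simply invoke \cite[Proposition~2]{DT08} and refer the reader there for the details of this norm-control argument.
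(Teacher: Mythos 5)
Your proposal matches the paper exactly: the paper offers no proof of Proposition~\ref{p.tmest} and simply quotes it as \cite[Proposition~2]{DT08}, which is precisely what you do, and your sketch of the underlying mechanism (trace-map control of $x_j$ for $j\le k$ on $\sigma_k^\delta$, power-law bounds on the Fibonacci blocks, and a bounded-complexity decomposition handling general $\omega$ and general $n\le F_k$) is consistent with the argument in that reference. No gap to report.
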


Combining ideas from the proof of \cite[Proposition~2]{DT08} and the proof of \cite[Theorem~5.1]{DG11}, one can show the following for the exponent $\xi$ in Proposition~\ref{p.tmest}. If we denote the largest root of the polynomial $x^3 - (2+\lambda) x - 1$ by $a_\lambda$ (note that for small $\lambda > 0$, we have $a_\lambda \approx \phi + c\lambda$ with a suitable constant $c$), then for any
\begin{equation}\label{e.gammaest}
\xi > 2 \frac{\log [(5 + 2\lambda)^{1/2} (3 + \lambda) a_\lambda]}{\log \phi},
\end{equation}
there is a constant $C$ such that \eqref{powerlaw} holds for $z \in \sigma_k^\delta$ and $\omega \in \T$.

\begin{proof}[Proof of Theorem~\ref{t.main}.]
Let us now consider $\lambda \in (0,\lambda_2)$ with $\lambda_2$ from Lemma~\ref{l.keylemma}. Fix $\delta \in (0,\lambda^2/8)$ and $\varepsilon > 0$. Lemma~\ref{l.keylemma} then implies that there is $k_0$ such that for every $k \ge k_0$, there is $E_k \in \R$ such that $x_k(E_k)= 0$ (so that in particular $E_k \in \sigma_k^{2\delta}$) and
\begin{equation}\label{e.xkprimeatek}
\left( e^{-\varepsilon + d(\lambda)} \right)^k \le |x_k'(e_k)| \le \left( e^{\varepsilon + d(\lambda)} \right)^k.
\end{equation}
Moreover, $E_k$ is the only zero of $x_k$ in its connected component relative to the set $\sigma_k^{2\delta} \cap \R$ (as a subset of $\R$).

Let us argue that $E_k$ is also the only zero of $x_k$ in its connected component relative to the set $\sigma_k^{2\delta}$ (as a subset of $\C$). Suppose this fails. Note that $\sigma_k^{2\delta}$ is symmetric with respect to the reflection about the real axis. If the connected component of $E_k$ relative to $\sigma_k^{2\delta} \cap \R$ extends to a connected component relative to $\sigma_k^{2\delta}$ that contains another zero of $x_k$, and hence another connected component of $\sigma_k^{2\delta} \cap \R$, we find that the boundary of this connected component, on which $x_k$ has constant modulus $1 + 2 \delta$, contains a closed curve that bounds a bounded region containing points at which $x_k$ has modulus strictly larger than $1 + 2 \delta$ (e.g., points on the real line strictly between the two connected components of $\sigma_k^{2\delta} \cap \R$ in question). Thus, we obtain a contradiction due to the maximum modulus principle.

Denote the connected component of $\sigma_k^{2\delta}$ that contains $E_k$ by $C_k$. Obviously, $E_k$ is also the only zero of $x_k$ in its connected component relative to $\sigma_k^{\delta}$, and we denote the connected component of $\sigma_k^{\delta}$ that contains $E_k$ by $D_k$. We can now proceed as in the proof of \cite[Proposition~3]{DT07} to show that $D_k$ must contain a ball centered at $E_k$ of a certain radius. For the convenience of the reader, we explain how this derivation works. Since $C_k$ contains exactly one zero of $x_k$, it follows from the maximum modulus principle and Rouch\'e's Theorem that
$$
x_k : \mathrm{int}(C_k) \to B(0,1 + 2\delta)
$$
is univalent, and hence
$$
x_k^{-1} : B(0,1 + 2\delta) \to \mathrm{int}(C_k)
$$
is well-defined and univalent as well. Consequently, the following mapping is a Schlicht function:
$$
F : B(0,1) \to \C, \quad F(z) = \frac{x_k^{-1} ((1 + 2\delta)z) - E_k}{(1 + 2\delta) [(x_k^{-1})'(0)]}.
$$
That is, $F$ is a univalent function on $B(0,1)$ with $F(0) = 0$ and $F'(0) = 1$.

The Koebe Distortion Theorem (see \cite[Theorem~7.9]{c}) implies that
\begin{equation}\label{koebe}
\frac{|z|}{(1 + |z|)^2} \le |F(z)| \le \frac{|z|}{(1 - |z|)^2} \text{ for } |z| \le 1.
\end{equation}
Evaluate the bound \eqref{koebe} on the circle $|z| = \frac{1 + \delta}{1 + 2\delta}$. For such $z$, we obtain
$$
\frac{(1 + \delta)(1 + 2\delta)}{(2 + 3 \delta)^2} \le |F(z)| \le \frac{(1 + \delta)(1 + 2\delta)}{\delta^2}.
$$
By definition of $F$ this means that
$$
| x_k^{-1} ((1 + 2\delta)z) - E_k | \le \frac{(1 + \delta)(1 + 2\delta)}{\delta^2} (1 + 2\delta) |(x_k^{-1})'(0)|
$$
and
$$
| x_k^{-1} ((1 + 2\delta)z) - E_k | \ge \frac{(1 + \delta)(1 + 2\delta)}{(2 + 3\delta)^2} (1 + 2\delta) |(x_k^{-1})'(0)|
$$
for all $z$ with $|z| = \frac{1 + \delta}{1 + 2\delta}$. In other words, if $|z| = 1 + \delta$, then
\begin{equation}\label{variation}
| x_k^{-1} (z) - E_k | \le \frac{(1 + \delta)(1 + 2\delta)^2}{\delta^2} |(x_k^{-1})'(0)|
\end{equation}
and
\begin{equation}\label{variation2}
| x_k^{-1} (z) - E_k | \ge \frac{(1 + \delta)(1 + 2\delta)^2}{(2 + 3\delta)^2} |(x_k^{-1})'(0)|.
\end{equation}
Since $|(x_k^{-1})'(0)| = |x_k'(E_k)|^{-1}$, we obtain from \eqref{e.xkprimeatek} and \eqref{variation} that
\begin{equation}\label{distance1}
| x_k^{-1} (z) - E_k | < \left(\frac{(1 + \delta)(1 + 2\delta)}{\delta} \right)^2 \left( e^{-\varepsilon + d(\lambda)} \right)^{-k}
\end{equation}
for all $z$ of magnitude $1 + \delta$. Similarly, \eqref{e.xkprimeatek} and \eqref{variation2} give
\begin{equation}\label{distance2}
| x_k^{-1} (z) - E_k | > \frac{(1 + \delta)(1 + 2 \delta)^2}{(2 + 3\delta)^2} \left( e^{\varepsilon + d(\lambda)} \phi \right)^{-k}
\end{equation}
for these values of $z$. Note that as $z$ runs through the circle of radius $1 + \delta$ around zero, the point $x_k^{-1} (z)$ runs through the entire boundary of $D_k$. Thus, \eqref{distance1} and \eqref{distance2} yield the following distortion result:
\begin{equation}\label{e.distortion}
B \Big( E_k, \frac{(1 + \delta)(1 + 2 \delta)^2}{(2 + 3\delta)^2} \left( e^{\varepsilon + d(\lambda)} \right)^{-k} \Big) \subseteq D_k \subseteq B \Big( E_k, \left(\frac{(1 + \delta)(1 + 2\delta)}{\delta} \right)^2 \left( e^{-\varepsilon + d(\lambda)} \right)^{-k} \Big).
\end{equation}
In particular, let us denote the radius of the inscribed ball by $r_k$:
$$
r_k = \frac{(1 + \delta)(1 + 2 \delta)^2}{(2 + 3\delta)^2} \left( e^{\varepsilon + d(\lambda)} \right)^{-k}.
$$
For $\rho > 0$ arbitrary, consider
\begin{equation}\label{e.sdef}
s = \frac{\lim_{k \to \infty} \frac1k \log \frac{1}{r_k}}{\log \phi} + \rho = \frac{\varepsilon + d(\lambda)}{\log \phi} + \rho.
\end{equation}
Then, for suitably chosen $C_\delta > 0$, we have $C_\delta F_k^{s} \ge \frac{2}{r_k}$ for every $k \ge 0$.

By Proposition~\ref{p.tmest} we have for $z \in D_k$ and $1 \le |n| \le F_k$ (use the uniformity in $\omega$ of the statement in Proposition~\ref{p.tmest} to deduce the analogous estimates on the left half-line),
\begin{equation}\label{e.2sidedpowerlaw}
\sup_{\omega \in \T} \| M(n;\omega,z) \| \le C |n|^\xi
\end{equation}
with suitable constants $C$ and $\xi$.

Take $N = F_k$ and consider $T \ge C_\delta N^{s}$ (which in turn implies $T \ge \frac{2}{r_k}$ by the choices of $s$ and $C_\delta$). Due to the Parseval formula \eqref{e.parsform}, we can bound the time-averaged outside probabilities from below as follows,
\begin{equation}\label{parseval}
\langle P(N,\cdot) \rangle (T) \gtrsim \frac1T \int_\R \left( \max \left\{ \|M(N;\omega,E+i/T)\|, \|M(-N;\omega,E+i/T)\| \right\} \right)^{-2} \, dE.
\end{equation}
See, for example, the proof of \cite[Theorem~1]{DT03} for an explicit derivation of \eqref{parseval} from \eqref{e.parsform}.

To bound the integral from below, we integrate only over those $E \in (E_k-r_k, E_k+r_k)$ for which $E+i/T \in B(E_k, r_k) \subset D_k$. Since $1/T \le r_k/2$, the length of such an interval $I_k$ is larger than $cr_k$ for some suitable $c>0$. For $E \in I_k$, we have
$$
\|M(N;\omega, E+i/T)\| \lesssim N^{\xi} \lesssim T^{\frac{\xi}{s}}.
$$
Therefore, \eqref{parseval} together with \eqref{e.2sidedpowerlaw} gives
\begin{equation}\label{eq1}
\langle P(N,\cdot) \rangle (T) \gtrsim \frac{r_k}{T} \, T^{-\frac{2\xi}{s}} \gtrsim T^{-2-\frac{2\xi}{2}},
\end{equation}
where $N = F_k$, $T \ge C_\delta N^s$, for any $k \ge k_0$.

Now let us take any sufficiently large $T$ and choose $k$ maximal with $C_\delta  F_k^s \le T$. Then,
$$
C_\delta F_k^s \le T < C_\delta F_{k+1}^s \le C_\delta 2^s F_k^s.
$$
It follows from \eqref{eq1} that
$$
\left\langle P \left( \tfrac{1}{2 C_\delta^{1/s}} T^{\frac{1}{s}},\cdot \right) \right\rangle (T) \ge \langle P(F_k,\cdot) \rangle (T) \gtrsim T^{-2-\frac{2\xi}{s}}
$$
for all sufficiently large $T$. It follows from the definition of $\langle \beta^-(p) \rangle$ and $\langle \alpha_u^- \rangle$ that
$$
\langle \beta_{\delta_0}^-(p) \rangle \ge \frac{1}{s} - \frac{2}{p}\left( 1 + \frac{\xi}{s} \right) = \left( \frac{\varepsilon + d(\lambda)}{\log \phi} + \rho \right)^{-1} - \frac{2}{p}\left( 1 + \xi \left( \frac{\varepsilon + d(\lambda)}{\log \phi} + \rho \right)^{-1} \right)
$$
and
$$
\langle \alpha_u^- \rangle \ge \frac{1}{s} = \left( \frac{\varepsilon + d(\lambda)}{\log \phi} + \rho \right)^{-1},
$$
by \eqref{e.sdef}. Since this is true for every $\varepsilon > 0$ and every $\rho > 0$, we have
\begin{equation}\label{e.betapest}
\langle \beta_{\delta_0}^-(p) \rangle \ge \frac{\log \phi}{d(\lambda)} - \frac{2}{p}\left( 1 + \frac{\xi \log \phi}{d(\lambda)} \right)
\end{equation}
and
\begin{equation}\label{e.alphauest}
\langle \alpha_u^- \rangle \ge \frac{1}{s} = \frac{\log \phi}{d(\lambda)}.
\end{equation}
Invoking \eqref{e.glambdaest}, the estimate \eqref{e.alphauest} yields $\langle \alpha_u^- \rangle \ge 1 - c \lambda^2$ for $\lambda$ sufficiently small.

Since we always have
$$
\langle \alpha_u^- \rangle (\lambda) \le \langle \alpha_u^+ \rangle (\lambda) \le 1,
$$
this completes the proof of the theorem.
\end{proof}

\begin{remark}\label{r.betapest}
Combining \eqref{e.betapest} with \eqref{e.glambdadef} and Proposition~\ref{p.tmest} along with \eqref{e.gammaest}, we obtain an explicit estimate for $\langle \beta_{\delta_0}^-(p) \rangle$.
\end{remark}

\end{document}